\newtheorem{theorem}{Theorem}[section]
\newtheorem{lemma}[theorem]{Lemma}
\newtheorem{corollary}{Corollary}
\newtheorem{claim}{Claim}
\theoremstyle{definition}
\newtheorem{definition}[theorem]{Definition}
\newtheorem{example}[theorem]{Example}
\theoremstyle{remark}
\numberwithin{equation}{section}
\begin{document}

\title{On Unique Independence Weighted Graphs}

\author{Farzad Didehvar}
\address{Department of Mathematical and Computer Science, Amirkabir University of Technology, Tehran, Iran.}
\email{didehvar@aut.ac.ir}

\author{Ali D. Mehrabi}
\address{Department of Mathematical Science, Sharif University of Technology, Tehran, Iran.}
\curraddr{Department of Mathematical Science, Yazd University, Yazd,
Iran.} \email{mehrabi@alum.sharif.edu \& mehrabi@yazduni.ac.ir}

\author{Fatemeh Raee B.}
\address{Department of Mathematical Science, Sharif University of Technology, Tehran, Iran.}
\email{f\_raee@alum.sharif.edu}

\subjclass{Primary 05C69, 05C90; Secondary 68R10, 68Q15, 68Q17}
\date{January 1, 1994 and, in revised form, June 22, 1994.}


\keywords{Vertex-weighted graph, weighted graph, independent set,
unique independent set, NP-completeness.}

\begin{abstract}
An {\sf independent set} in a graph {\it G} is a set of vertices no
two of which are joined by an edge. A vertex-weighted graph
associates a weight with every vertex in the graph. A
vertex-weighted graph {\it G} is called a {\sf unique independence
vertex-weighted graph} if it has a unique independent set with
maximum sum of weights. Although, in this paper we observe that the
problem of recognizing unique independence vertex-weighted graphs is
NP-hard in general and therefore no efficient characterization can
be expected in general; we give, however, some combinatorial
characterizations of unique independence vertex-weighted graphs.
This paper introduces a motivating application of this problem in
the area of {\sf combinatorial auctions}, as well.
\end{abstract}

\maketitle

\section*{Introduction and preliminaries}

In this paper, we focus on graphs whose vertices have real weights
and call such graphs for simplicity, just weighted graphs. Also, we
study unique independent sets in finite vertex weighted graphs. For
the definition of basic concepts and
notations not given here one may refer to a textbook in graph theory, for example \cite{G}, and \cite{I}. \\
Let $G=(V,E)$ be a simple undirected graph with the vertex set
$V=\{1,2,\cdots,n\}$, the edge set $E$ and a nonnegative weight
$w(i)$ associated with each vertex $i\in V$. The weight of
$S\subseteq V(G)$ is defined as $w(S)=\sum_{i\in S} w(i)$. A subset
$I$ of $V(G)$ is called an {\sf independent set} (or a stable set)
if the subgraph $G[I]$ induced by $I$ of $G$ has no edges. A {\sf
maximum weighted independent set}, also called $\alpha$-set, is an
independent set of the largest weight in {\it G}. The weight of a
maximum weighted independent set in {\it G} is denoted by
$\alpha(G)$. A weighted graph {\it G} is a {\sf unique independence
weighted graph}, if
{\it G} has a unique independent set with maximum sum of weights. \\
Characterizing unique independence graphs and various
generalizations of this concept has been a subject of research in
graph theory literature. As a few examples, we refer the interested
reader to \cite{B}, \cite{D}, and \cite{C}. Also, some existing
papers have focused on finding or even approximating the maximum
independent set problem in weighted graphs. See \cite{A}, and
\cite{E} for more details. As we will observe in this paper, this is
not coincidental: we show that the problem of recognizing unique
maximum independence weighted graphs is NP-hard in general and
therefore no efficient characterization of this concept can be
expected in general.\\ To our best knowledge, this is the first
paper discussing the unique maximum weighted independent set problem
and gives some characterizations of it and, most importantly,
defines the problem of {\sf unique combinatorial auctions.} How
weighted graphs and combinatorial auctions are related, is given in
the last section, where we make some future research directions. For
more details on combinatorial auctions, the interested reader is
referred to a detailed article or a textbook
both on combinatorial auctions, respectively in \cite{F} and \cite{J}.\\
The rest of the paper is organized as follows: Section 1 gives a
characterization of unique independence weighted graphs as
generalization of the unique independence graphs. Section 2
introduces some theorems on characterization of unique independence
weighted graphs most of which are based on neighborhood concept. In
section 3 we show the NP-hardness of recognizing the unique
independence weighted graphs and finally section 4 presents some
notes on how much this article would be interesting and what lines
of future researches this article may create.


\section{Unique independence weighted graphs}
\label{UWIG} In this section we exhibit one basic theorem in
addition to a corollary obtained from the theorem, both as
generalizations of unique independence graphs.

\begin{theorem}
\label{T1} Let G be a weighted graph and let I be an $\alpha$-set of
G.
Then the following conditions are equivalent:\\
\\ (i) G is a unique independence weighted graph and I is the unique
$\alpha$-set of G.
\\(ii) For every $x\in I$ we have $\alpha(G\backslash
\{x\})<\alpha(G)$.
\end{theorem}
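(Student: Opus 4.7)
The plan is to prove the two implications separately, both of which follow fairly directly from unpacking the definition of an $\alpha$-set. The forward direction is essentially bookkeeping, and the reverse is naturally handled by contrapositive.

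For $(i) \Rightarrow (ii)$, I would fix an arbitrary $x \in I$ and consider any independent set $J$ of $G \setminus \{x\}$. Because $J$ is still independent in $G$ and necessarily differs from $I$ (since $x \in I$ but $x \notin J$), the uniqueness hypothesis forces $w(J) < \alpha(G)$. Taking the maximum over all such $J$ yields $\alpha(G \setminus \{x\}) < \alpha(G)$, as required.

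For $(ii) \Rightarrow (i)$, I would argue by contraposition: assume $I$ is not the unique $\alpha$-set of $G$, so there exists an $\alpha$-set $J \neq I$. Provided we can locate some $x \in I \setminus J$, the set $J$ sits entirely inside $G \setminus \{x\}$ while retaining weight $\alpha(G)$, giving $\alpha(G \setminus \{x\}) \ge \alpha(G)$ and contradicting (ii). The only step requiring care is verifying that $I \setminus J \neq \emptyset$. If weights are strictly positive, this is immediate: $J \supsetneq I$ would give $w(J) > w(I) = \alpha(G)$, which is impossible, so $J \neq I$ forces a vertex of $I$ outside $J$. Under the paper's stated nonnegativity, one additionally needs to dismiss the case $J \supsetneq I$, in which all vertices of $J \setminus I$ carry weight zero; the natural fix is to first note that (ii) itself implies every $x \in I$ has positive weight (otherwise $I \setminus \{x\}$ would witness $\alpha(G \setminus \{x\}) \ge \alpha(G)$), and then to treat ``unique'' in the strict sense the definition demands so that any such $J$ is indeed a second distinct $\alpha$-set.

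The main obstacle, then, is not the logic of the argument but handling zero-weight vertices cleanly: the contrapositive above reduces immediately once we guarantee $I \setminus J$ is nonempty, and I would expect the author's proof to proceed under the implicit assumption of positive weights so that this subtlety never surfaces.
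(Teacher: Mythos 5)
Your argument is essentially the same as the paper's: the forward direction observes that any independent set of $G\setminus\{x\}$ missing $x\in I$ must have weight strictly below $\alpha(G)$ by uniqueness, and the reverse direction takes a second $\alpha$-set $J$, picks $x\in I\setminus J$, and notes $J$ survives intact in $G\setminus\{x\}$ with weight $\alpha(G)$, contradicting (ii). So on the main line you and the paper coincide.

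The one place you diverge is that you explicitly worry about whether $I\setminus J$ is nonempty, whereas the paper simply writes ``and $x\in I\backslash I'$'' without justification. Your instinct here is correct and the concern is real: with merely nonnegative weights the implication $(ii)\Rightarrow(i)$ is actually \emph{false}. Take $G$ with two nonadjacent vertices $a,b$, $w(a)=1$, $w(b)=0$, and $I=\{a\}$. Then $\{a,b\}$ is a second $\alpha$-set, so (i) fails, yet $\alpha(G\setminus\{a\})=0<1=\alpha(G)$, so (ii) holds. However, the patch you propose does not close this gap: showing that every $x\in I$ has positive weight says nothing about zero-weight vertices \emph{outside} $I$, which is exactly what the counterexample exploits, and ``treating unique in the strict sense'' only confirms that $J$ is a genuine second $\alpha$-set rather than eliminating it. The honest resolution is the one you gesture at in your last sentence: assume strictly positive weights (then every $\alpha$-set is a maximal independent set, no $\alpha$-set can properly contain another, and $I\setminus J\neq\emptyset$ follows), which is evidently what the paper implicitly does. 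With that hypothesis made explicit your proof is complete; without it, neither your fix nor the paper's proof is valid as written.
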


\begin{proof}
$(i)\Rightarrow(ii)$ Suppose there exist $x\in I$ that $\alpha(G
\backslash\{x\})\geq \alpha(G)=w(I).$ So, $G\backslash\{x\}$
contains an independent set $I'$ which differs from $I$ and also has
the property $w(I')\geq w(I).$ This obviously contradicts either
maximality or the uniqueness of $I$.
\\
\\
$(ii)\Rightarrow (i)$ Suppose $G$ has another maximum weighted
independent set $I', w(I')=w(I)$, and $x\in I\backslash I'$. The set
$I'$ remains a maximum weighted independent set of $G\backslash
\{x\}$. So, $w(I') = \alpha(G\backslash \{x\} )<\alpha(G)=w(I),$
which is a contradiction. So, $G$ has a unique $\alpha$-set.
\end{proof}

\begin{corollary}
Let G be an edge weighted graph and let M be a maximum matching of
G. The following conditions are equivalent:\\
\\
(i) M is a unique maximum matching of G.\\
(ii) For every $e\in M$ we have $\alpha'(G\backslash
\{x\})<\alpha'(G)$.
\end{corollary}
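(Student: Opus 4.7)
The plan is to reduce the corollary to Theorem \ref{T1} via the line graph construction. Given an edge-weighted graph $G$, form its line graph $L(G)$, whose vertices are the edges of $G$ and in which two vertices are adjacent precisely when the corresponding edges of $G$ share an endpoint. Transfer the edge weights of $G$ to the corresponding vertices of $L(G)$, turning $L(G)$ into a vertex-weighted graph in the sense of the previous section.

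Under this standard correspondence, matchings in $G$ are exactly the independent sets in $L(G)$, and the weight of a matching equals the weight of the corresponding independent set. Consequently $\alpha'(G) = \alpha(L(G))$, and the maximum matching $M$ of $G$ corresponds to an $\alpha$-set $I_M$ of $L(G)$. Furthermore, deletion of an edge $e$ from $G$ corresponds to the deletion of the associated vertex $v_e$ from $L(G)$, which yields $\alpha'(G\setminus\{e\}) = \alpha(L(G)\setminus\{v_e\})$. (I read the $x$ in condition (ii) as a typographical slip for $e$.)

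With these identifications in hand, the equivalence is just Theorem \ref{T1} applied to $L(G)$ and $I_M$: condition (i) of the corollary is the same statement as condition (i) of the theorem, and condition (ii) of the corollary is the same statement as condition (ii) of the theorem, with the two sides of each equivalence transported verbatim across the bijection. Thus no genuinely new argument is required.

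There is no real obstacle here; the only point to verify carefully is that the weight-preserving bijection between matchings in $G$ and independent sets in $L(G)$ is compatible with edge/vertex deletion, which is immediate from the definition of the line graph. Once this is noted, the proof is a one-line invocation of Theorem \ref{T1}.
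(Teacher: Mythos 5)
Your proposal is correct and follows exactly the paper's own route: the paper likewise passes to the line graph $L(G)$, identifies maximum matchings of $G$ with maximum independent sets of $L(G)$, and invokes Theorem \ref{T1}. You simply spell out the weight transfer and the compatibility with deletion, which the paper leaves implicit.
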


\begin{proof}
Any maximum matching of {\it G} is corresponding to a maximum
independent set of its line graph, {\it L(G)}. So the statement
follows from Theorem \ref{T1}.
\end{proof}

\section{Neighborhood-based characterization of unique independence weighted graphs}
\label{CUWG} In this section, we try to state some theorems on
characterizations of unique weighted graphs mainly based on the
neighborhood concept.

\begin{definition}
For any vertex $x\in V(G)$ the open neighborhood of x in G, N(x,G), is defined as:\\
$N(x,G)=N_G(x)=\{ y\in V(G) | xy\in E(G)\}$. In addition, the
extension of this concept to any subset {\it I} of vertices of a
graph {\it G} is defined as: $N_G(I)=\cup_{x\in I} N_G(x)$.

\end{definition}

\begin{definition}
For a subset I of V(G) and a vertex $x\in I$, we define:\\
$$p_I^G(x)=N_G(x)\backslash N_G(I\backslash\{x\})$$
\\
Furthermore, for every subset I of V(G), we define the set p$_G(I)$
as: $$p_G(I) = \bigcup_{x\in I} p_{I}^G(x)$$
\end{definition}

The following lemma gives a sufficient neighborhood-oriented
condition by which the uniqueness of a weighted graph is
established.
\begin{lemma}
\label{L1} Let G be a weighted graph and let $I$ be an $\alpha$-set
of G. If for any $I_0\subseteq I$ we have $w(p_G(I_0))<w(I_0)$, then
G is unique independence weighted graph and $I$ is the unique
$\alpha$-set of G.
\end{lemma}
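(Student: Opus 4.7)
The plan is to argue by contradiction. Suppose $J$ is an independent set with $J \neq I$ and $w(J) \geq w(I) = \alpha(G)$. Writing $I_0 = I \setminus J$ and $J_0 = J \setminus I$ and cancelling $w(I \cap J)$ from both sides, this becomes $w(J_0) \geq w(I_0)$, with $I_0$ nonempty (otherwise $I \subsetneq J$ would contradict $I$ being an $\alpha$-set, at least under the standing convention of positive weights).

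The next step is the key structural observation that every $y \in J_0$ has at least one neighbor in $I_0$. Since $J$ is independent, $y$ has no neighbor in $I \cap J$; if it also had no neighbor in $I_0$, then $I \cup \{y\}$ would be an independent set of strictly larger weight, contradicting $\alpha$-maximality of $I$. (I would flag the degenerate case of a zero-weight vertex outside $I$ with no neighbor in $I$: the $p_G$-hypothesis alone does not exclude it, so either one assumes positive weights or one adds a mild genericity convention.)

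The core of the proof is then a double count using the hypothesis specialised to singletons $I_0 := \{x\}$. In that case $p_G(\{x\}) = N_G(x) \setminus N_G(\emptyset) = N_G(x)$, so the hypothesis yields $w(N_G(x)) < w(x)$ for every $x \in I$. Counting the weight on the bipartite $I_0$--$J_0$ edges of $G$ in two ways,
\[
\sum_{x \in I_0} w\bigl(N_G(x) \cap J_0\bigr) \;=\; \sum_{y \in J_0} w(y)\,\bigl|N_G(y) \cap I_0\bigr|,
\]
and using $|N_G(y) \cap I_0| \geq 1$ on the right and the singleton hypothesis on the left, I obtain
\[
w(J_0) \;\leq\; \sum_{x \in I_0} w\bigl(N_G(x) \cap J_0\bigr) \;\leq\; \sum_{x \in I_0} w\bigl(N_G(x)\bigr) \;<\; w(I_0),
\]
contradicting $w(J_0) \geq w(I_0)$; hence no such $J$ exists.

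The main obstacle I anticipate is the zero-weight edge case noted above; beyond that, the argument is essentially a handshake bound once the setup is in place. It is worth remarking that the proof actually uses only the singleton instances of the hypothesis: since the private-neighbor sets $p_{I_0}^G(x)$ are pairwise disjoint subsets of $N_G(x)$, one has $w(p_G(I_0)) \leq \sum_{x \in I_0} w(N_G(x))$, so the full hypothesis is in fact equivalent to the simpler condition $w(N_G(x)) < w(x)$ for every $x \in I$.
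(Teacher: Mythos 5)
Your argument is correct for the definitions as literally written, and it takes a genuinely different route from the paper's. The paper's proof takes a second $\alpha$-set $I'$, asserts the inclusion $I'\setminus I\subseteq p_G(I\setminus I')$, and applies the hypothesis exactly once, to $I_0=I\setminus I'$; you instead use only the singleton instances $w(N_G(x))<w(x)$ together with a weighted handshake count over the edges between $I_0$ and $J_0$. Your version is in one respect more robust: since $p_G(S)$ consists precisely of the vertices with \emph{exactly} one neighbour in $S$, the paper's inclusion silently requires every vertex of $I'\setminus I$ to have exactly one neighbour in $I\setminus I'$, whereas maximality of $I$ only delivers ``at least one,'' which is all your double count needs. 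Your flag on zero weights is also well taken: the paper allows nonnegative weights, and an isolated vertex of weight $0$ outside $I$ defeats the lemma as stated, so both proofs implicitly need positivity or some convention ruling this case out.

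One caveat you should be aware of. Your reduction to singletons rests on $p_G(\{x\})=N_G(x)$, which is what the displayed definition gives, since $p_G(I_0)=\bigcup_{x\in I_0}p_{I_0}^G(x)$ measures privacy inside $I_0$ itself. The paper's own Example, however, computes $p_G(\{A\})=N_G(A)\setminus N_G(C)=\{E\}$, i.e., it measures privacy relative to the whole $\alpha$-set $I=\{A,C\}$ rather than relative to $I_0=\{A\}$. If that is the intended reading, the hypothesis is strictly weaker than $w(N_G(x))<w(x)$ (in the Example, $w(N_G(A))=6>5=w(A)$ even though the singleton conditions in the $I$-relative sense hold), your first step is unavailable, and your closing remark that the hypothesis collapses to the singleton condition no longer applies. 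So your proof establishes the lemma for the definition as stated; it does not establish (and could not, without modification) the variant the authors appear to use later in the paper.
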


\begin{proof}
By contradiction. Let $I'$ be another $\alpha$-set of {\it G}. This
means, $I\backslash I' \neq \emptyset$, $w(I)=w(I')$, and also
$w(I\backslash I')=w(I' \backslash I)$. Clearly, $I'\backslash
I\subseteq p_G(I\backslash I')$ and thus $w(I'\backslash I)\leq
w(p_G(I\backslash I'))$. Replacing the left side of this equation by
its equivalent, $w(I\backslash I')$ , results $w(I\backslash I')\leq
w(p_G(I\backslash I'))$. Now, taking $I_0=I\backslash I'$
contradicts the hypothesis of the lemma.
\end{proof}
The converse of Lemma \ref{L1} is not true for all weighted graphs.
The following example gives a counter-example.

\begin{example} Suppose that G is the following
weighted graph.
\begin{center}
\begin{figure}[ht]
\label{counter-example}

\def\emline#1#2#3#4#5#6{%
\put(#1,#2){\special{em:moveto}}%
\put(#4,#5){\special{em:lineto}}}

\def\newpic#1{}
\unitlength 0.7mm \special{em:linewidth 0.4pt} \linethickness{0.4pt}
\begin{picture}(45,45)(110,110) 
%
\put(126,150){\r{\circle*{2}}} \put(108,135){\circle*{2}}
\put(144,135){\circle*{2}} \put(116,120){\circle*{2}}
\put(137,120){\r{\circle*{2}}}
\emline{126}{150}{1}{108}{135}{2}
\emline{126}{150}{1}{144}{135}{2}
\emline{108}{135}{1}{116}{120}{2}
\emline{144}{135}{1}{137}{120}{2}
\emline{116}{120}{1}{137}{120}{2}
\put(95,135){\makebox(0, 0)[cc]{E (2)}} \put(115,115){\makebox(0,
0)[cc]{D (1)}} \put(138,115){\makebox(0, 0)[cc]{C (2)}}
\put(152,135){\makebox(0, 0)[cc]{B (4)}} \put(126,155){\makebox(0,
0)[cc]{A (5)}}

\end{picture}
\vspace*{-3mm} \caption{A Counter-Example}

\end{figure}
\end{center}

The numbers enclosed in parentheses, are the vertices' weights.
Suppose $I=\{A,C\}.$ $I$ is a unique $\alpha$-set of G. If
$I_0$=\{A\}, then $p_G(I_0)=\{E\}$. So, $w(p_G(I_0))=2<5=w(A)$. If
$I_0$=\{C\} then $p_G(I_0)$=\{D\}. So, $w(p_G(I_0))=1<2=w(C)$. If
$I_0$=\{A,C\}, then $p(I_0)=\{B,E,D\}$. So, $w(p_G(I_0))=7 \nless
7$. Therefore, there is a subset of $I$ not satisfying the condition
given by the Lemma \ref{L1}.

\end{example}

The next theorem exhibits the fact that the converse of Lemma
\ref{L1} is true for all trees.

\begin{theorem}
\label{T2} Let T be a weighted tree and let $I$ be an $\alpha$-set
of T. The following conditions are equivalent:\\
\\
(i) T is unique independence weighted tree and $I$ is the unique
$\alpha$-set of T. \\
(ii) For every $I_0\subseteq I$, we have $w(p_T(I_0))<w(I_0)$.
\end{theorem}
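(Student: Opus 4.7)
Plan: The implication $(ii)\Rightarrow(i)$ is exactly Lemma~\ref{L1}, which holds for every weighted graph; so all the work is in $(i)\Rightarrow(ii)$, which uses that $T$ is a tree. I would argue by strong induction on $|V(T)|$, with the cases $|V(T)|\leq 2$ easily verified directly.

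For the inductive step, assume $(i)$ and suppose, for contradiction, that some nonempty $I_0\subseteq I$ satisfies $w(p_T(I_0))\geq w(I_0)$. Consider the candidate replacement
\[
   I' \;:=\; (I\setminus I_0)\cup p_T(I_0).
\]
By the very definition of $p_T(I_0)$, every $y\in p_T(I_0)$ has all its $I$-neighbors inside $I_0$. This yields two facts: first, $p_T(I_0)\cap I=\emptyset$, so $I'\neq I$; second, no edge of $T$ joins $p_T(I_0)$ to $I\setminus I_0$, since such an edge would give a vertex of $p_T(I_0)$ an $I$-neighbor outside $I_0$. Combined with $w(I')\geq w(I)$, this shows that if $I'$ is independent it is a second $\alpha$-set of $T$, contradicting $(i)$.

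The only remaining possibility is that some edge $y_1y_2\in E(T)$ lies entirely inside $p_T(I_0)$. Pick $x_i\in I_0\cap N(y_i)$; the triangle-freeness of the tree forces $x_1\neq x_2$. Deleting $y_1y_2$ splits $T$ into strictly smaller subtrees $T_1\ni x_1,y_1$ and $T_2\ni x_2,y_2$. Writing $I^{(i)}:=I\cap V(T_i)$ and $I_0^{(i)}:=I_0\cap V(T_i)$, I would verify two claims. (a) $I^{(i)}$ is the unique $\alpha$-set of $T_i$: any alternative $\alpha$-set $J$ of $T_i$ combines with $I^{(3-i)}$ into a second $\alpha$-set of $T$, the removed edge $y_1y_2$ being harmless because $y_2\notin I$. (b) $p_T(I_0)=p_{T_1}(I_0^{(1)})\sqcup p_{T_2}(I_0^{(2)})$: for each $y\in V(T_i)$, the only neighbor possibly lost in passing from $T$ to $T_i$ is $y_2\notin I$, so the set of $I$-neighbors of $y$ in $T$ agrees with its set of $I^{(i)}$-neighbors in $T_i$.

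Applying the inductive hypothesis to each $T_i$ (each $I_0^{(i)}\ni x_i$ is nonempty) then gives $w(p_{T_i}(I_0^{(i)}))<w(I_0^{(i)})$, and summing these strict inequalities contradicts $w(p_T(I_0))\geq w(I_0)$. The delicate point is claim (b): its content is that deleting an edge one of whose endpoints lies outside $I$ leaves every $I$-neighborhood in the tree unchanged, so the private-neighborhood structure separates cleanly along the split; everything else is a clean bookkeeping contradiction chase.
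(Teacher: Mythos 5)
Your proof is correct, and its engine is the same as the paper's: an edge $y_1y_2$ of $T$ lying inside $p_T(I_0)$ is a bridge whose two endpoints have distinct owners $x_1\neq x_2$ in $I_0$ (no triangles in a tree), and deleting it makes both $w(I_0)$ and $w(p_T(I_0))$ split additively over the two sides, so two strict inequalities for the pieces contradict $w(p_T(I_0))\geq w(I_0)$. The difference lies in where those two strict inequalities come from. The paper never leaves $T$: it chooses $I_0$ of minimum weight among the violating subsets of $I$, splits it as $I_1\cup I_2$ along the bridge, and extracts $w(p_T(I_j))<w(I_j)$ from that minimality; this forces $p_T(I_0)$ to be independent for the minimal violator, whence $(I\setminus I_0)\cup p_T(I_0)$ is a second $\alpha$-set. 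You instead induct on $|V(T)|$ and recurse into the subtrees $T_1,T_2$, which obliges you to verify two claims the paper never needs --- that $I\cap V(T_i)$ is the unique $\alpha$-set of $T_i$, and that $p_T(I_0)$ restricts to $p_{T_1}(I_0^{(1)})\sqcup p_{T_2}(I_0^{(2)})$ --- and both verifications are sound, resting correctly on the fact that neither endpoint of the deleted edge lies in $I$. What your route buys is that it sidesteps a soft spot in the paper's argument: ``minimality of $w(I_0)$'' yields $w(p_T(I_j))<w(I_j)$ only if $w(I_j)<w(I_0)$, which can fail when zero weights are permitted (the paper's proof would need $I_0$ taken inclusion-minimal among minimum-weight violators to be airtight); your induction on the tree has no such issue. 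The cost is the extra bookkeeping of claims (a) and (b). Note finally that both arguments read $p_T(I_0)$ relative to the ambient $\alpha$-set $I$, so that $p_T(I_0)\cap N(I\setminus I_0)=\emptyset$ and the swap $(I\setminus I_0)\cup p_T(I_0)$ is legitimate; this is the reading the paper also uses in the proof of Theorem \ref{T3}, and you make explicit what the paper leaves implicit.
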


\begin{proof}
(ii) $\Rightarrow$ (i) Is implied directly from Lemma \ref{L1}.\\
\\
(i) $\Rightarrow$ (ii) The proof by contradiction. Let $A = \{I' \ |
\ I'\subseteq I \ and \ w(p_T(I'))\geq w(I')\}$ and $W_A = \{ w(I')
| I'\in A \}$. Based on the contrary hypothesis, the set $W_A$ is
not empty and therefore making the assumption that $\alpha$ to be
the smallest element of $W_A$ is allowed. Let us to suppose that the
set $I_0$ be that member set of $A$ corresponding to the value
$\alpha$, $\alpha= w(I_0)$. Take $I'=(I\backslash I_0)\bigcup
p_T(I_0)$ and claim that $I'$ is an independent set so that
$w(I')\geq w(I)$. In order to prove this claim, it suffices to show
that $p_T(I_0)$ is an independent set. We show this by
contradiction. Let $x,y\in p_T(I_0)$ and $xy\in E(T)$. {\it T} is a
tree, so $xy$ is a bridge of {\it T}. Therefore $T\backslash \{xy\}$
has two components, say $T_1$ and $T_2$. Suppose $I_1=I_0\bigcap
V(T_1)$ and $I_2=I_0\bigcap V(T_2)$. Take $x,y\in p_T(I_0)$, so
there are $x',y'\in I_0$ that $x\in p_T(x')$ and $y\in p_T(y')$.
Thus $I_1,I_2\neq \emptyset$ and $I_1\bigcap I_2=\emptyset$ and
$I_1\bigcup I_2=I_0$. On the other hand, $p_T(I_1)\bigcap
p_T(I_2)=\emptyset$ and $p_T(I_1)\bigcup p_T(I_2)=p_T(I_0)$. So:

\begin{equation}
\label{E1} w(p_T(I_0))=w(p_T(I_1))+w(p_T(I_2))
\end{equation}

But we have $I_1 \subsetneqq I_0$ and $I_2 \subsetneqq I_0$. By
minimality of $w(I_0)$ we have $w(p_T(I_1))<w(I_1)$ and
$w(p_T(I_2))<w(I_2)$. This contradicts Equation \ref{E1}.
\end{proof}

The following theorem gives a general condition under which the
uniqueness of a weighted graph is established. Before proceeding
this theorem, we make a prerequisite definition.

\begin{definition}
For any $I\subseteq V(G)$, we denote the maximum weighted
independent set of $p_G(I)$ by $m(I)$.
\end{definition}

\begin{theorem}
\label{T3} Let G be a weighted graph and $I$ be an $\alpha$-set of
G. The following conditions are
equivalent:\\
\\
(i) G is unique independence weighted graph and $I$ is the unique
$\alpha$-set of G.\\ (ii) For every $I_0\subseteq I$, we have
$w(m(I_0))<w(I_0)$.
\end{theorem}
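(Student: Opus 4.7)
The plan is to extend the tree-case argument of Theorem~\ref{T2} to arbitrary weighted graphs, with $m(I_0)$ playing the role that $p_T(I_0)$ played in the tree proof. The point is that in a tree $p_T(I_0)$ was automatically independent, whereas in a general graph we must first select the heaviest independent subset of $p_G(I_0)$, namely $m(I_0)$.

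For $(ii)\Rightarrow(i)$ I would argue as in Lemma~\ref{L1}. Assuming for contradiction a second $\alpha$-set $I'$, set $I_0 := I\setminus I'$ and $J_0 := I'\setminus I$. Both are non-empty and $w(I_0)=w(J_0)$. Since $I'$ is independent, every $y\in J_0$ has all its neighbors in $I$ lying inside $I_0$, and from this one verifies that $J_0\subseteq p_G(I_0)$. Because $J_0\subseteq I'$ is itself independent, $J_0$ is a candidate for $m(I_0)$, so $w(m(I_0))\ge w(J_0)=w(I_0)$, contradicting $(ii)$.

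For $(i)\Rightarrow(ii)$ I would follow the minimality argument used in Theorem~\ref{T2}. Let
\[ A := \{\,I_0\subseteq I : w(m(I_0))\ge w(I_0)\,\}, \]
suppose for contradiction that $A$ is non-empty, and let $I_0\in A$ be of minimum weight. Put $I':=(I\setminus I_0)\cup m(I_0)$; formally $w(I')=w(I)-w(I_0)+w(m(I_0))\ge w(I)$, and $I'\ne I$ because $m(I_0)\subseteq V(G)\setminus I$ is non-empty, so once we know $I'$ is independent we obtain a second $\alpha$-set, contradicting~$(i)$.

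The only non-trivial check is the independence of $I'$. Both $m(I_0)$ (by definition) and $I\setminus I_0$ (as a subset of $I$) are already independent on their own, so the only worry is an edge between them, i.e. a vertex $v\in m(I_0)\subseteq p_G(I_0)$ adjacent to some $x\in I\setminus I_0$. Ruling this out is where the minimality of $I_0$ must be used: the plan is to show that the presence of such an edge would let us exhibit a strictly lighter element of $A$, contradicting the choice of $I_0$, in the same spirit as the bridge decomposition $T\setminus\{xy\}=T_1\cup T_2$ used in Theorem~\ref{T2}. This independence verification is the main obstacle and the step I expect to require the most care, since it is precisely the structural difference between trees and general graphs that the tree proof exploited.
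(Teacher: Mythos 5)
Your $(ii)\Rightarrow(i)$ direction is exactly the paper's argument (and the same as in Lemma~\ref{L1}): $I'\setminus I$ is an independent subset of $p_G(I\setminus I')$, hence $w(m(I\setminus I'))\ge w(I'\setminus I)=w(I\setminus I')$, contradicting $(ii)$. No issues there beyond those already present in the paper itself.

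The gap is in $(i)\Rightarrow(ii)$, and it is precisely the step you flag as ``the main obstacle'': you never actually prove that $I'=(I\setminus I_0)\cup m(I_0)$ is independent; you only announce a plan to extract a lighter member of $A$ from a bad edge. That plan is doubtful on its own terms, since the proof of Theorem~\ref{T2} used minimality of $w(I_0)$ together with the bridge decomposition $T\setminus\{xy\}=T_1\cup T_2$ to split $I_0$ and $p_T(I_0)$ into disjoint pieces, and a general graph offers no such decomposition --- this is exactly why the counterexample following Lemma~\ref{L1} defeats the $p$-based condition outside of trees. More importantly, the obstacle you are trying to clear with minimality is not there, and the paper's proof is direct, taking an arbitrary violating $I_0$ with no minimality at all: $m(I_0)$ is independent by definition, $I\setminus I_0$ is independent as a subset of $I$, and there are no edges between them because $m(I_0)\subseteq p_G(I_0)$ and $p_G(I_0)\cap N_G(I\setminus I_0)=\emptyset$. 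The latter is read off from the definition of the private neighborhood: a vertex of $p_G(I_0)$ lies in $N_G(x)\setminus N_G(I\setminus\{x\})$ for some $x\in I_0$, so its only neighbor in $I$ is $x$ itself and it cannot be adjacent to anything in $I\setminus I_0$. The role that minimality played in Theorem~\ref{T2} was to make $p_T(I_0)$ internally independent, and in Theorem~\ref{T3} that job is already done by replacing $p_G(I_0)$ with $m(I_0)$ --- which is the substitution you correctly identified at the outset. So the cross-edges are excluded by the definition of $p_G$, your worry dissolves, and the argument closes in one line; but as written, your proposal does not constitute a proof of $(i)\Rightarrow(ii)$.
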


\begin{proof}
(i) $\Rightarrow$ (ii) This part is done by contradiction. Suppose
there exist $I_0\subseteq I$ such that $w(m(I_0))\geq w(I_0)$. Let
$I'=(I\backslash I_0)\bigcup m(I_0)$. So $I'\neq I$  and $w(I')\geq
w(I)$. On the other hand, $m(I_0)$ and $I\backslash I_0$ are
independent sets, $m(I_0)\subseteq p_G(I_0)$ and $p_G(I_0)\bigcap
N(I\backslash
I_0)=\emptyset$. Hence, $I'$ is an independent set which is a contradiction.\\
\\
(ii) $\Rightarrow$ (i) This part is also done by contradiction.
Suppose $I'$ be another $\alpha$-set of {\it G}. This means
$w(I)=w(I')$ and also:
\begin{equation}
\label{E2} w(I\backslash I')=w(I'\backslash I)
\end{equation}
In addition, $I'\backslash I\subseteq p_G(I\backslash I')$ and
$I'\backslash I$ is an independent set. Therefore, $w((I'\backslash
I))\leq w(m(I\backslash I'))$. Now, let $I_0=I\backslash I'$ and
thus, by condition (ii), we have $w(m(I\backslash I'))<w(I\backslash
I')$. So, we obtanied $w(I'\backslash I)\leq w(m(I\backslash
I'))<w(I\backslash I')$, which contradicts Equation \ref{E2}.
\end{proof}

The prior theorem needs to see whether all subsets of the
$\alpha$-set, have the given condition by the theorem. If so, then
the uniqueness will be confirmed. What if not? Hence, it seems that
Theorem \ref{T3} for those weighted graphs whose independent sets
are decent large in size, needs a time-consuming process before
making any decision. For the sake of this, we exhibit the theorem
below, particularly useful and thus important for those weighted
graphs including a decent large independent set.

\begin{theorem}
\label{T4} Let G be a weighted graph and let I be an $\alpha$-set
of G. The following conditions are equivalent:\\
\\
(i) G is unique independence weighted graph and I is the unique
$\alpha$-set of G. \\
(ii) For every nonempty independent subset J of V(G)$\backslash$I,
we have: $w(N(J)\cap I)>w(J)$.
\end{theorem}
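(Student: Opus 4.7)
The plan is to prove both implications by exhibiting a single well-chosen swap between $I$ and an independent set on the other side, then comparing weights.

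For $(ii)\Rightarrow(i)$, I would argue by contradiction. Suppose $I'\neq I$ is another $\alpha$-set and set $J=I'\setminus I$. Then $J$ is nonempty (otherwise $I'\subseteq I$, and since both are $\alpha$-sets we would have $I'=I$), $J$ is independent as a subset of $I'$, and $J\subseteq V(G)\setminus I$. The key observation is that $N(J)\cap I\subseteq I\setminus I'$: any $v\in N(J)\cap I$ is adjacent to some vertex of $J\subseteq I'$, and since $I'$ is independent, $v\notin I'$. Combined with $w(I)=w(I')$, which forces $w(I\setminus I')=w(I'\setminus I)=w(J)$, this gives $w(N(J)\cap I)\le w(I\setminus I')=w(J)$, contradicting (ii).

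For $(i)\Rightarrow(ii)$, given a nonempty independent $J\subseteq V(G)\setminus I$, I would construct the candidate swap
\[
I^{\star}=(I\setminus N(J))\cup J.
\]
I need to check $I^{\star}$ is independent: the piece $I\setminus N(J)$ is independent as a subset of $I$, $J$ is independent by hypothesis, and by the very definition of $I\setminus N(J)$ there are no edges between the two pieces. Since $J$ is nonempty and disjoint from $I$, we have $I^{\star}\neq I$, so by the uniqueness hypothesis (i), $w(I^{\star})<w(I)$. Expanding $w(I^{\star})=w(I)-w(N(J)\cap I)+w(J)$ rearranges to $w(N(J)\cap I)>w(J)$, which is exactly (ii).

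The main obstacle is essentially bookkeeping: in the forward direction, one must realize that the right object to track is $J=I'\setminus I$ (and not, say, the symmetric difference), so that the independence of $I'$ directly forces $N(J)\cap I$ to avoid $I'$; in the reverse direction, one must discover the swap $I^{\star}=(I\setminus N(J))\cup J$ and verify that removing exactly the $I$-neighbors of $J$ (not more, not less) yields an independent set, so that the weight accounting is tight. Both checks are routine once the right sets are identified, and there is no additional combinatorial hypothesis needed beyond what Lemma \ref{L1} already exploited.
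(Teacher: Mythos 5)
Your proof is correct and follows essentially the same route as the paper: the swap $I^{\star}=(I\setminus N(J))\cup J$ for $(i)\Rightarrow(ii)$ and the inclusion $N(I'\setminus I)\cap I\subseteq I\setminus I'$ for $(ii)\Rightarrow(i)$ are exactly the two ingredients the paper uses. The only difference is cosmetic: you phrase the second direction as a contradiction with a fixed $J=I'\setminus I$, while the paper runs the same inequalities as a direct chain showing $w(I')<w(I)$.
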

\begin{proof}
(i) $\Rightarrow$ (ii) If $J$ is a nonempty independent subset of
$V(G)\backslash I$, then $(I\backslash N(J))\cup J$ is an
independent set in $G$. $I$ is the unique independent set of $G$, so
$w((I\backslash N(J)))<w(I)$. Thus we have: $w(N(J)\cap I)>w(J)$.\\
\\
(ii) $\Rightarrow$ (i) Let $I'$ be an independent subset of {\it G}.
It suffices to show that $w(I')<w(I)$. Since $I'\backslash I$ is a
nonempty independent subset of $V(G)\backslash I$, we have:
$w(N(I'\backslash I)\cap I)>w(I'\backslash I)$. Moreover,
$N(I'\backslash I)\cap I\subseteq I\backslash I'$ and therefore
$w(I')=w(I'\cap I)+w(I'\backslash I)<w(I'\cap I)+w(N(I'\backslash
I)\cap I)\leq w(I\cap I')+w(I\backslash I')=w(I)$. So, $w(I')<w(I)$.
\end{proof}

\par\noindent
The next theorem is intended to prove this our conjecture\footnote{A
conjecture that we had made at the early steps of this work.} on
whether or not a given weighted graph has a unique makeup? In fact,
the following theorem proves that for every weighted graph, there
are many other weighted graphs whose independent sets are same as
the given graph, if their vertices' weights have been drawn from a
predefined real interval connected to the vertices' weights of the
original given graph. This theorem proves this and provides that
interval, as well.
\begin{theorem}
\label{T5}
 Let G be a weighted graph and $I$ be an $\alpha$-set of
G. If G is a unique independence graph and $I$  is the unique
$\alpha$-set of G, then there is a positive real number,
$\epsilon>0$, such that if the weights of G's vertices change in
$(w(x)-\epsilon, w(x)+\epsilon)$, then G with these new weights
remains a unique independence weighted graph with the same
$\alpha$-set.
\end{theorem}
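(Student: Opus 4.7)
The plan is to exploit finiteness: even though the weight function is real-valued, the family of independent sets of $G$ depends only on the (finite) combinatorial structure of $G$, not on the weights. So there are only finitely many candidates that could challenge $I$ for the title of maximum-weight independent set, and each of them is already beaten by $I$ by a strictly positive margin. Perturbing all weights by a sufficiently small amount cannot close any of these finitely many gaps.

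Concretely, first I would list the independent sets of $G$ as $I = I_0, I_1, \ldots, I_m$, noting that this list is fixed regardless of weights. Using the hypothesis that $I$ is the \emph{unique} $\alpha$-set, each difference $w(I) - w(I_j)$ for $j \geq 1$ is strictly positive, and so the minimum
\[
\delta \;=\; \min_{1 \leq j \leq m}\bigl(w(I) - w(I_j)\bigr)
\]
is a well-defined positive real number. Set $n = |V(G)|$ and choose $\epsilon = \delta/(2n+1)$ (any value strictly less than $\delta/(2n)$ works).

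Now suppose $w'$ is any new weight assignment with $|w'(x) - w(x)| < \epsilon$ for every vertex $x$. For any subset $S \subseteq V(G)$, the triangle inequality gives
\[
|w'(S) - w(S)| \;\leq\; \sum_{x \in S} |w'(x) - w(x)| \;<\; n\epsilon.
\]
Applying this to $S = I$ and $S = I_j$ and combining, I obtain
\[
w'(I) - w'(I_j) \;>\; \bigl(w(I) - w(I_j)\bigr) - 2n\epsilon \;\geq\; \delta - 2n\epsilon \;>\; 0
\]
for each $j \geq 1$. Hence $I$ remains the strictly largest-weight independent set under $w'$, which is exactly what the theorem asks for.

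There is no real obstacle here; the whole argument rests on the single observation that the family of independent sets is combinatorially determined and therefore finite, which turns a continuous perturbation question into a minimum over finitely many strict inequalities. The only mild care is in the choice of constant: taking $\epsilon$ strictly below $\delta/(2n)$ (rather than $\delta/n$) is what allows the two $n\epsilon$ error terms from $w'(I)$ and $w'(I_j)$ to be absorbed simultaneously.
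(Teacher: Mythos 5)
Your proof is correct, and it takes a genuinely different and more elementary route than the paper. The paper proves the theorem by invoking its Theorem \ref{T3} characterization (that $w(m(I_0))<w(I_0)$ for every $I_0\subseteq I$), defining three separate margins $\sigma$, $\eta$, $\nu$, and then verifying that the Theorem \ref{T3} inequality survives the perturbation; you instead observe that the family of independent sets is a fixed finite collection determined by the graph alone, take the minimum gap $\delta=\min_j\bigl(w(I)-w(I_j)\bigr)$ over all other independent sets, and note that a per-vertex perturbation below $\delta/(2n+1)$ cannot close any gap. Your argument is self-contained (it needs none of the $p_G$ or $m(\cdot)$ machinery), and amusingly the paper's quantity $\eta$ is exactly your $\delta$ --- the paper uses it only to conclude that $I$ stays an $\alpha$-set and then does extra work for uniqueness, whereas your single minimum already delivers both. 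One could even sharpen your constant to anything below $\delta/n$, since the perturbation errors on $I\cap I_j$ cancel in the difference $w'(I)-w'(I_j)$, but your more conservative choice is perfectly valid. The only thing the paper's longer route buys is the additional information that the neighborhood-based characterizing inequalities themselves persist under perturbation, which is not needed for the statement as given.
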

\begin{proof}
$I$ is the unique $\alpha$-set of $\it G$ so by Theorem \ref{T3} for
every $I_0\subseteq I$, $w(m(I_0))<w(I_0).$ \\Let:
\begin{itemize}
\item $\sigma=\min$ \{ $w(I_0)-w(m(I_0)) | I_0\subseteq I$ \},
\item $\eta=\min$ \{ $w(I)-w(I_0) | I_0$ is an independent set of $G$
\},
\item $\nu=\min$ \{ $w(m(I_0))-w(J) | I_0\subseteq I$ and $J$ is an independent set of $p_G(I_0)$
\}.
\end{itemize}
and let $\delta=\min$ \{ $\sigma, \eta, \nu$ \} and also
$\epsilon=\frac{\delta}{n+1}$, where $n$ is the number of $\it G$'s
vertices. Suppose $G'$ is a copy of $G$, with new changed vertices
weights, $w'$, such that for every $x\in V(G)$:
$w(x)-\epsilon<w'(x)<w(x)+\epsilon$. Now, we make the following
claim in order to complete the proof.
\begin{claim}
\label{C2} $G'$ is a unique independence weighted graph and $I$ is
the unique $\alpha$-set of $G'$.
\end{claim}
{\bf Proof of Claim 1:} By definition of $\eta$, $I$ is an
$\alpha$-set of $G'$. To prove the uniqueness of $I$, it's
sufficient to show that for every $I_0\subseteq I$,
$w'(m(I_0))<w'(I_0)$. Proof by contradiction: Suppose there is a
subset of vertices like $J$, such that $J\subseteq I$ and also:
\begin{equation}
\label{E3} w'(m(J))\geq w'(J)
\end{equation}
By definition of $\nu$, $m(J)$ is an $\alpha$-set of $p_G(J)$ in
both $G$ and $G'$. So we have:
\begin{equation}
\label{E4} w(J)-|J|.\epsilon \leq w'(J) \leq w(J)+|J|.\epsilon
\end{equation}
\begin{equation}
\label{E5} w(m(J))-|m(J)|.\epsilon \leq w'(m(J)) \leq
w(m(J))+|m(J)|.\epsilon.
\end{equation}
By combining Equations \ref{E3}, \ref{E4}, \ref{E5} and some simple computations, we achieve:\\
\begin{equation}
\label{E6} w(J)\leq w'(J)+|J|.\epsilon\leq w'(m(J))+|J|.\epsilon\leq
w(m(J))+|m(J)|.\epsilon+|J|.\epsilon.
\end{equation}
From Equation \ref{E6}, the following equation is obtained.
\begin{equation}
\label{E7} w(J)\leq w(m(J))+\epsilon.(|J|)+|m(J)|\leq
w(m(J))+\epsilon.n<w(m(J))+\delta.
\end{equation}
Finally, we achieve: $w(J)-w(m(J))<\sigma$. Obviously, this
contradicts the definition of $\sigma$. This completes the proof of
Claim \ref{C2} and thus the proof of Theorem \ref{T5} is now
complete.
\end{proof}

\begin{corollary}
For every given weighted graph $G$, there are infinite number of
weighted graphs whose vertices' weights are real and their
independent sets are the same as $G$.
\end{corollary}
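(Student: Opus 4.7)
The plan is to read this corollary as the natural counting statement implied by Theorem \ref{T5}: for each vertex $x$ of $G$ we are handed an open interval $(w(x)-\epsilon,\,w(x)+\epsilon)$ of admissible new weights, and any selection of a new weight $w'(x)$ from this interval produces a weighted graph whose $\alpha$-set coincides with that of $G$.

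First I would invoke Theorem \ref{T5} to obtain the positive real $\epsilon$ promised by the theorem. Next I would observe that each open interval $(w(x)-\epsilon,\,w(x)+\epsilon)$ contains uncountably many real numbers, so that independently perturbing the weights of the vertices yields an uncountable family of weight functions $w'$, with distinct $w'$ obviously defining distinct weighted graphs on the same underlying vertex-edge structure as $G$. Theorem \ref{T5} guarantees that each member of this family is a unique independence weighted graph with the same $\alpha$-set $I$ as $G$; in particular all of them have the same collection of maximum weighted independent sets as $G$, which is what the corollary asserts.

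The main delicate point, rather than any computation, is the interpretation of the phrase ``their independent sets are the same as $G$'': since the combinatorial independent sets depend only on the edge set, the meaningful content is the coincidence of the $\alpha$-set(s), and Theorem \ref{T5} supplies exactly this in the unique-independence case. If one insists on reading ``every weighted graph $G$'' literally, so that $G$ need not be a unique independence weighted graph, the corollary can be recovered by the even simpler device of scaling: replacing $w$ by $cw$ for an arbitrary positive real $c$ preserves every weight inequality and hence the entire family of $\alpha$-sets, and again yields infinitely many distinct weighted graphs with the required property.
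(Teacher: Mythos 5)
Your proposal is correct and matches the paper's (implicit) intent exactly: the corollary is stated without proof as an immediate consequence of Theorem \ref{T5}, obtained by choosing uncountably many distinct weight functions inside the open intervals $(w(x)-\epsilon,\,w(x)+\epsilon)$. Your additional observation that the corollary as stated drops the ``unique independence'' hypothesis of Theorem \ref{T5}, and your scaling remedy for that case, is a legitimate and worthwhile point the paper itself does not address.
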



\section{Complexity of unique maximum weighted independent set problem}
We prove that the following problems are NP-hard. Both problems ask
for detecting whether a given vertex weighted graph has a unique
maximum weighted independent set; in the first problem, the input
contains a candidate for the unique maximum weighted independent set
in addition
to the graph.\\

Problem $UI_1:$
\\
{\bf Input:} A weighted graph {\it G}, a set {\it I} of the vertices of {\it G}.\\
{\bf Question:} Is {\it I} the unique maximum weighted independent set in {\it G}?\\

Problem $UI_2:$
\\
{\bf Input:} A weighted graph {\it G}.\\
{\bf Question:} Does {\it G} have a unique maximum weighted independent set?\\

We prove the NP-hardness of these problems by reducing the following problem to them:\\

Problem {\large W}{EIGHTED} {\large I}{NDEPENDENT} {\large S}{ET}:\\
{\bf Input:} A weighted graph {\it G}, an integer {\it k}.\\
{\bf Question:} Does {\it G} contain an independent set of weight at least {\it k}?\\
\\
The latest problem is NP-Complete and one may refer to \cite{H}
for a proof.\\

Now, the following two theorems exhibit the complexity classes to
which the problems $UI_1$ and $UI_2$ are belonging.

\begin{theorem}
Problem $UI_1$ is coNP-complete.
\end{theorem}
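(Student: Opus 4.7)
The plan is to establish the two halves of coNP-completeness separately: first membership in coNP, then coNP-hardness via a polynomial-time reduction from the complement of \textsc{Weighted Independent Set}.

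For membership, I would observe that a ``no''-certificate for $UI_1$ is simply a second independent set $I' \neq I$ with $w(I') \geq w(I)$. Given such a candidate $I'$, one can check in polynomial time that $I'$ is independent in $G$, that $I' \neq I$, and that $w(I') \geq w(I)$; any one of these pieces of evidence (either $I'$ strictly heavier, or $I'$ distinct but of equal weight) witnesses that $I$ is not the unique $\alpha$-set. Hence $UI_1 \in \mathrm{coNP}$.

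For hardness, the plan is to reduce \textsc{Weighted Independent Set} to the complement of $UI_1$. Given an instance $(G,k)$ of \textsc{Weighted Independent Set}, I would construct $G'$ from $G$ by adding a single new vertex $v$ of weight exactly $k$ adjacent to every vertex of $G$, and then set $I := \{v\}$. The output of the reduction is the instance $(G', I)$ of $UI_1$. The construction is clearly polynomial. The key correctness claim is that $I = \{v\}$ is the unique maximum weighted independent set of $G'$ if and only if $G$ has no independent set of weight at least $k$. Indeed, because $v$ is adjacent to all other vertices, every independent set of $G'$ is either $\{v\}$ (of weight $k$) or is an independent set of $G$ (of weight $w(S)$ for some independent $S \subseteq V(G)$). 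So if $G$ has an independent set $S$ with $w(S) \geq k$, then either $w(S) > k$ (in which case $\{v\}$ is not an $\alpha$-set) or $w(S) = k$ with $S \neq \{v\}$ (in which case $\{v\}$ is not unique); conversely, if every independent set of $G$ has weight $< k$, then $\{v\}$ is both maximum and the unique such set.

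I do not expect a serious obstacle here: the gadget is minimal, and the equivalence is a direct case analysis on whether an independent set of $G'$ contains $v$. The one subtle point worth stating carefully is the case $w(S) = k$ with $S = \emptyset$, which cannot happen because weights are nonnegative and $k$ can be assumed positive (otherwise \textsc{Weighted Independent Set} is trivial on $(G,k)$, and such trivial instances can be handled by the reduction's preprocessing). Combining the reduction with the coNP membership above yields that $UI_1$ is coNP-complete.
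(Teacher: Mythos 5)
Your proposal is correct and follows essentially the same approach as the paper: coNP membership via a ``no''-certificate consisting of a distinct independent set of weight at least $w(I)$, and coNP-hardness by reducing the complement of \textsc{Weighted Independent Set} using a gadget of total weight $k$ joined to every vertex of $G$ and taken as the candidate set $I$. The only difference is that you add a single vertex of weight $k$ where the paper adds $k$ vertices of weight $1$; your variant is, if anything, cleaner, since it stays polynomial even when $k$ is encoded in binary.
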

\begin{proof}
First, we show that this problem is in coNP. To see this, it is
enough to observe that a witness for the non-membership of an
instance $(G,I)$ in $UI_1$ is an independent set of weight greater
than or equal to $w(I)$. We now show that the problem is
coNP-complete by showing a reduction from the complement of {\large
W}{EIGHTED} {\large I}{NDEPENDENT} {\large S}{ET} problem to this
problem. Given an instance $(G,k)$ of {\large W}{EIGHTED} {\large
I}{NDEPENDENT} {\large S}{ET}, construct a graph $H$ by adding $k$
vertices to $G$ and all the edges between these $k$ vertices and the
vertices of $G$ (but no edge between the $k$ new vertices) and then
set the weight of each new added vertex by 1. Let $H$ denote the
resulting graph, and $I$ denote the set of $k$ vertices in $V(H)
\backslash V(G)$. We claim that $(G,k)\in$ {\large W}{EIGHTED}
{\large I}{NDEPENDENT} {\large S}{ET} if and only if $(H,I)\notin
UI_1$. This is because by construction, every independent set of $H$
is either a subset of $I$, or an independent set in $G$. Therefore,
$I$ is the unique maximum weighted independent set in $H$ if and
only if $G$ does not contain an independent set whose weight exceeds
$w(I)$. Therefore, the above construction is a polynomial time
reduction from the complement of {\large W}{EIGHTED} {\large
I}{NDEPENDENT} {\large S}{ET} to $UI_1$. This completes the proof of
coNP-completeness of $UI_1$.
\end{proof}

For problem $UI_2$, the situation is less clear, as the problem does
not seem to be in NP or coNP. It is not difficult to show that this
problem is in the complexity class $\sum_{2}$, but we do not know if
it is $\sum_{2}$-complete. However, we can still show that the
problem is intractable, assuming $P\neq NP$.

\begin{theorem}
Problem $UI_2$ is NP-hard.
\end{theorem}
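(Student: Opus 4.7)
My plan is a polynomial-time Karp reduction from \textsc{Weighted Independent Set}, which the authors have already noted to be NP-complete, to $UI_2$. The difficulty compared with the reduction used for $UI_1$ is that here we cannot hand the decision procedure a candidate independent set, so the reduction must build the tie into the graph itself in such a way that its resolution is controlled by the WIS answer.

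Given an instance $(G,k)$ with $V(G)=\{1,\dots,n\}$ and integer weights $w_1,\dots,w_n$, I would first perturb the weights so as to guarantee that the modified graph has a unique $\alpha$-set. Put $N=2^{n+1}$ and reweight each vertex by $w'_i=Nw_i+2^i$; call the result $G'$. For any independent set $S$, $w'(S)=Nw(S)+R(S)$ with $R(S)=\sum_{i\in S}2^i<N$, so the ordering induced by $w'$ is the lexicographic ordering by $(w(S),R(S))$. Since distinct subsets have distinct $R$-values, $G'$ has a unique $\alpha$-set. The same inequality yields, for $k\ge 1$, that $\alpha(G)\ge k$ iff $\alpha(G')>Nk$: a nonempty witness $S$ with $w(S)\ge k$ has $R(S)>0$ and hence $w'(S)>Nk$, while every $S$ with $w(S)\le k-1$ has $w'(S)\le N(k-1)+(N-1)<Nk$.

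Next I would convert the strict threshold ``$\alpha(G')>Nk$?'' into a uniqueness question. Let $H$ be obtained from $G'$ by attaching two new vertices $a,b$, each of weight $Nk$, each adjacent to every vertex of $G'$ and to one another. Since $a$ and $b$ dominate $V(G')$ and are mutually adjacent, every independent set of $H$ is either $\{a\}$, $\{b\}$, or an independent set of $G'$; hence $\alpha(H)=\max(Nk,\alpha(G'))$. A short case analysis then reads: if $\alpha(G')>Nk$ the unique $\alpha$-set of $H$ is the unique $\alpha$-set of $G'$, so $H\in UI_2$; if $\alpha(G')<Nk$ then both $\{a\}$ and $\{b\}$ are $\alpha$-sets of $H$, so $H\notin UI_2$. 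The boundary case $\alpha(G')=Nk$ cannot occur when $k\ge 1$, since equality there would force $w(S)=k$ with $R(S)=0$ (impossible for nonempty $S$); the trivial case $k=0$ is handled separately by mapping it to a fixed YES instance. So $H\in UI_2$ iff $(G,k)$ is a YES instance, and NP-hardness follows.

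The main obstacle I would watch for is the dual role of $N$: it must be large enough that $R(S)<N$ for every $S$ (so that the tiebreaker truly forces uniqueness in $G'$) while still small enough that the construction is polynomial. The choice $N=2^{n+1}$ achieves both, because the new weights have only $O(n+\log\max_i w_i)$ bits. Once that balance is secured, the reduction is essentially mechanical and the NP-hardness of $UI_2$ drops out.
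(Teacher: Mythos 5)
Your reduction is correct, but it runs in the opposite direction from the paper's and uses a different mechanism, so it is worth comparing the two. The paper reduces from the \emph{complement} of \textsc{Weighted Independent Set}: it keeps $G$'s weights intact and attaches an independent ``anchor'' set $I$ of $k+1$ unit-weight vertices together with a two-vertex edge $R$, arranged so that a tie among $\alpha$-sets appears exactly when $\alpha(G)\ge k$ (two winners obtained by adjoining either endpoint of $R$ to a maximum independent set of $G$) and the anchor $I$ is the unique winner otherwise; thus YES instances of WIS map to $H\notin UI_2$. You instead reduce from WIS itself: the $2^i$ tie-breaking perturbation forces $G'$ to have a unique $\alpha$-set unconditionally, and your two-vertex gadget at threshold $Nk$ creates the tie precisely when $\alpha(G)<k$, so YES instances map to $H\in UI_2$. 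Your version buys a genuine many-one (Karp) reduction from an NP-complete problem, which is the literal content of ``$UI_2$ is NP-hard''; the paper's version is simpler (no weight manipulation, no boundary-case analysis) but, read strictly, is a Karp reduction from a coNP-complete problem and hence yields NP-hardness only in the Turing-reduction sense. The two constructions are complementary: together they show $UI_2$ is both NP-hard and coNP-hard under many-one reductions, which matches the paper's remark that $UI_2$ appears to lie in $\Sigma_2$ but in neither NP nor coNP. Your handling of the delicate points (the bound $R(S)<N=2^{n+1}$, the impossibility of the boundary case $\alpha(G')=Nk$ for $k\ge 1$, the separate treatment of $k=0$, and the polynomial bit-length of the new weights) is all in order.
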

\begin{proof}
As in the proof of the previous theorem, we show a reduction from
the complement of {\large W}{EIGHTED} {\large I}{NDEPENDENT} {\large
S}{ET} to $UI_2$. Given an instance $(G,k)$ of {\large W}{EIGHTED}
{\large I}{NDEPENDENT} {\large S}{ET}, construct a graph $H$ by
adding a set $I$ of $k+1$ vertices and another set $R$ of two
vertices with 1 as the weight of each vertex in both sets, to $G$.
The edges of $H$ are the edges of $G$ plus edges between all
vertices in $I$ and all vertices in $V(G)\cup R$, and also one edge
between the two vertices of $R$. We claim that $(G,k)\in$ {\large
W}{EIGHTED} {\large I}{NDEPENDENT} {\large S}{ET} if and only if
$H\notin UI_2$. This is because by construction, every weighted
independent set of $H$ is either a subset of $I$, or a subset of
$V(G)\cup R$. The weight of largest independent set in $V(G)\cup R$
is precisely $\alpha(G)+1$. Therefore, the weight of the largest
independent set of $H$ is $\max(k,\alpha(G))+1$. Therefore, if $G$
has an independent set of weight at least $k$, at least two
$\alpha$-sets in $H$ can be obtained by adding either of the
vertices of $R$ to a maximum weighted independent set of $G$. Thus,
$H\notin UI_2$ in this case. Conversely, if $G$ has no weighted
independent set of weight $k$ or more the unique $\alpha$-set of $H$
is $I$. Therefore, the above construction is a polynomial time
reduction from the complement of {\large W}{EIGHTED} {\large
I}{NDEPENDENT} {\large S}{ET} to $UI_2$. This reduction completes
the proof.
\end{proof}

\section{Concluding remarks and future research directions}

To our knowledge, this is the first paper beginning this line of
research due to the diverse applications this problem have. Of
which, let us to outline one the most important ones which has
recently received the attention of researchers and has also opened
many doors of research opportunities in fields like algorithmic game
theory, computational economics and e-commerce. Combinatorial
auctions are mechanisms for allocating a set of items between a set
of (likely selfish) agents. It's a well-known principle that every
combinatorial auction can be equivalently viewed as a
vertex-weighted graph where bid sets and the set of winners of the
former correspond to the vertices and the maximum independent set of
the latter\cite{E}. Hence, the most significant result of this study
is to define the problem of {\sf unique combinatorial auctions},
those combinatorial auctions whose the set of winners is unique, and
providing some good starting points for this line of research. So,
the authors believe that this article may be considered as the first
step to define the problem of unique combinatorial auctions and also
as the first step to characterize and classify unique combinatorial
auctions not only as a specified problem but also as the first for
future related studies.


\section*{Acknowledgments}
The authors would like to warmly thank Prof. Ebadollah S. Mahmoodian
for providing advices and encouragements. They also take the
opportunity to thank the Institute for studies in applied Physics
and Mathematics (IPM), and particularly Prof. Shahriar Shahriari,
for their warm hospitality and invaluable helps, when they were
doing this study. Also, the second and third authors appreciate the
hospitality of the Department of Mathematical and Computer Science
at the Amirkabir University of Technology, while doing this
research.

\bibliographystyle{amsalpha}

\end{document}